\newtheorem{theorem}{Theorem}
\newcounter{row}
\newcounter{col}
\definecolor{refblue}{RGB}{102, 102, 153}
\definecolor{stringgreen}{RGB}{80, 107, 65}
\definecolor{keywordblue}{RGB}{64, 89, 245}
\definecolor{commentbrown}{RGB}{59, 35, 0}
\lstdefinestyle{lststyle}{
  commentstyle=\color{commentbrown},
  keywordstyle=\color{keywordblue},
  numberstyle=\tiny\color{gray},
  stringstyle=\color{stringgreen},
  basicstyle=\ttfamily\footnotesize,
  breakatwhitespace=false,
  breaklines=true,
  numbers=none,
  captionpos=b,
  frame=lines,
  keepspaces=true,
  numbers=left,
  numbersep=5pt,
  showspaces=false,
  showstringspaces=false,
  showtabs=false,
  tabsize=1,
  columns=fullflexible 
}
\newcolumntype{C}[1]{>{\centering\arraybackslash}p{#1}}
\newcolumntype{L}{>{\raggedright\arraybackslash}X}
\newcommand{\R}{\mathbb R}
\newcommand{\Z}{\mathbb Z}
\newcommand{\C}{\mathbb C}
\renewcommand{\H}{\mathcal H}
\newcommand{\qv}{\texttt{qv}}
\newcommand{\M}{M}
\renewcommand{\P}{P}
\newcommand{\XXYY}{\mathrm{XX+YY}}
\newcommand{\pred}{\mathrm{pred}}
\renewcommand{\tcp}[1]{\textit{/*#1}\break}
\begin{document}

\title{Solving the Product Breakdown Structure Problem with constrained QAOA}

\author{René Zander}
\email{rene.zander@fokus.fraunhofer.de}
\affiliation{Fraunhofer Institute for Open Communication Systems, Berlin, Germany}

\author{Raphael Seidel}
\email{raphael.seidel@fokus.fraunhofer.de}
\affiliation{Fraunhofer Institute for Open Communication Systems, Berlin, Germany}

\author{Matteo Inajetovic}
\email{m.inajetovic@tu-berlin.de}
\affiliation{Technische Universität Berlin, Berlin, Germany}

\author{Niklas Steinmann}
\email{niklas.steinmann@fokus.fraunhofer.de}
\affiliation{Fraunhofer Institute for Open Communication Systems, Berlin, Germany}

\author{Matic Petri\v{c}}
\email{matic.petric@fokus.fraunhofer.de}
\affiliation{Fraunhofer Institute for Open Communication Systems, Berlin, Germany}


\maketitle

\begin{abstract}
Constrained optimization problems, where not all possible variable assignments are feasible solutions, comprise numerous practically relevant optimization problems such as the Traveling Salesman Problem (TSP), or portfolio optimization.
Established methods such as quantum annealing or vanilla QAOA usually transform the problem statement into a QUBO (Quadratic Unconstrained Binary Optimization) form, where the constraints are enforced by auxiliary terms in the QUBO objective.
Consequently, such approaches fail to utilize the additional structure provided by the constraints.

In this paper, we present a method for solving the industry relevant Product Breakdown Structure problem. 
Our solution is based on constrained QAOA, which by construction never explores the part of the Hilbert space that represents solutions forbidden by the problem constraints. The size of the search space is thereby reduced significantly. We experimentally show that this approach has not only a very favorable scaling behavior, but also appears to suppress the negative effects of Barren Plateaus.
\end{abstract}

\section{Introduction}
\label{sec:introduction}

\begin{figure}
\includegraphics[scale = 0.55]{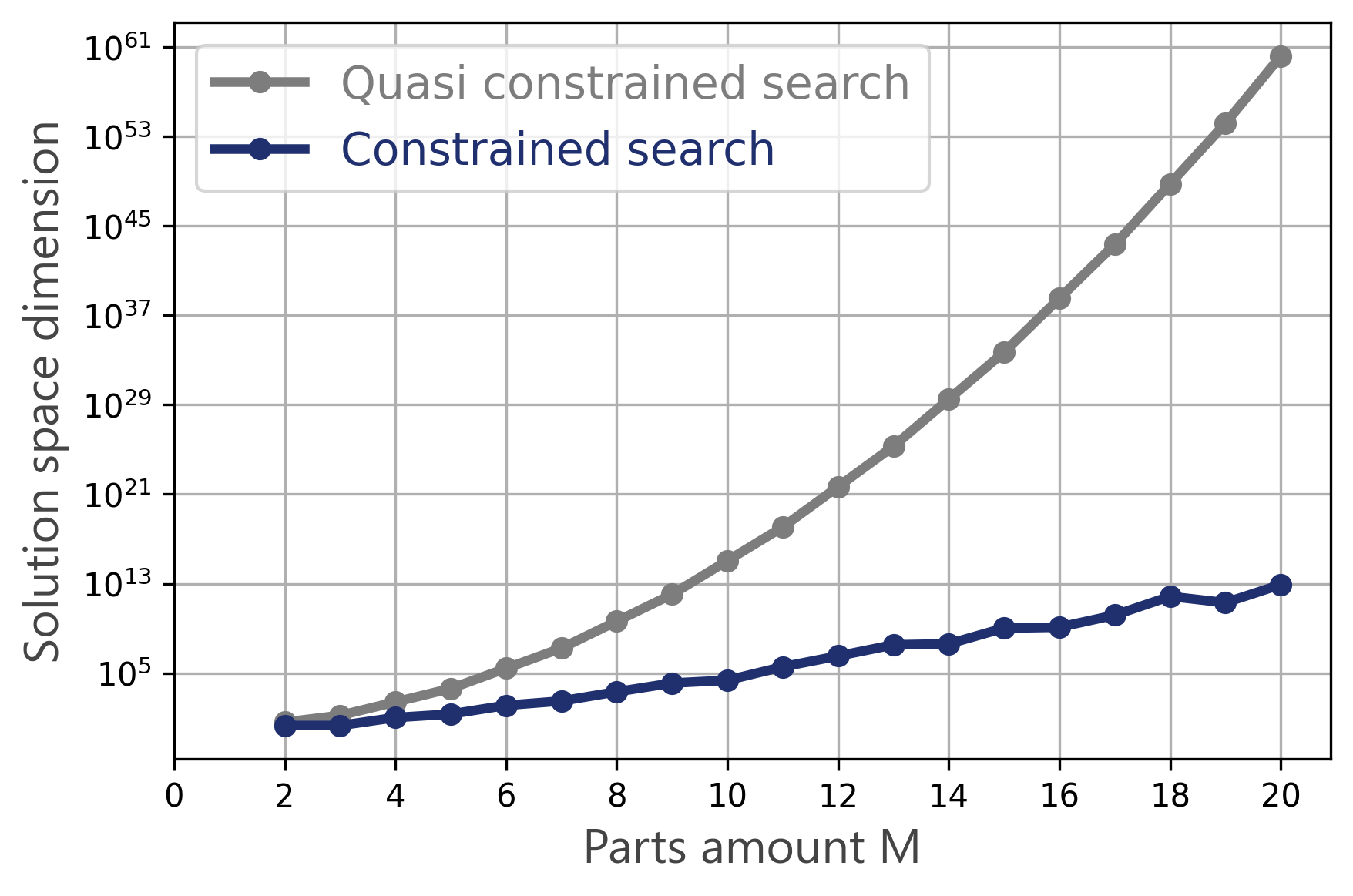}
\caption{\label{fig:state_space_dimension}Comparison of the state space dimensions of the constrained search space versus the quasi constrained search. Quasi constrained means that the full space is admitted but the problem constraints are enforced by encoding them into the cost function. The PBS graph instances, encoding the constraints, to evaluate the data are randomly generated tree graphs with maximum degree $N = \left\lfloor M/2 \right\rfloor$. 
For each $M$ we generate 30 such graphs and evaluate the average. Note that at $M = 20$ the solution space of the quasi constrained approach differs by 46 orders of magnitude, demonstrating a very clear scaling advantage.}
\end{figure}

In this paper, we develop a novel approach for solving a combinatorial optimization problem with high relevance in logistics and supply chain management, namely the product breakdown structure (PBS) problem \cite{AirbusBMW,Koopmans_1957}. A key challenge in supply chain optimization is determining how to best distribute parts manufacturing or assembly amongst possible suppliers at different geographical locations in order to minimize carbon dioxide emissions and ensure a reliable and efficient supply chain for manufacturing processes. 
The application of quantum solutions in logistics promises to accelerate the path towards sustainable and more efficient supply chains.

The PBS problem belongs to the class of constrained optimization problems. In this context, \textit{constrained} means that not every variable assignment results in a feasible solution. This class of problems comprises a wide array of practically relevant optimization problems.
While widely-used methods, such as quantum annealing or the plain vanilla Quantum Approximate Optimization Algorithm (QAOA), usually transform the problem statement into a Quadratic Unconstrained Binary Optimization (QUBO) form, we purposefully avoid this step. This is because with the conversion to a QUBO problem statement, said problem structure is usually irreversibly destroyed and can no longer be leveraged.

In this work, we utilize a modified QAOA approach, which by construction never reaches the part of the Hilbert space that represents solutions forbidden by the constraints. This, by itself, already yields powerful advantages as a result of the size of the search space being significantly reduced (see Fig.~\ref{fig:state_space_dimension}). 
This is achieved by designing a constraint-specific state preparation algorithm which creates a superposition of all feasible solutions. Thereby, a constrained mixer that leaves the subspace of feasible solutions invariant is constructed.

The outlined method is implemented in the high-level programming framework \textit{Eclipse Qrisp} \cite{Qrisp_2024,Seidel_2024,Osaba_2024}. 
\textit{Qrisp} enables developers to write quantum code with ease, facilitating a systematic approach to quantum software engineering.
Therewith, quantum computing is shifted from the domain of research and experiments to the domain of programmers, applications and businesses.

The paper is structured as follows: 
We start by defining the PBS problem in Section~\ref{sec:pbs_problem}, before providing a short overview over the Quantum Approximate Optimization Algorithm \cite{Farhi_2014} together with some of its variants addressing problem constraints \cite{Hadfield_2019,Bärtschi_2020} in Section~\ref{sec:QAOA}. 
The concept of constrained mixers is elaborated on in Section~\ref{sec:mixers}. 
In Section~\ref{sec:state_preparation}, we introduce a novel constraint-specific state preparation algorithm. Subsequently, we present compilation techniques for the constrained mixer that are specifically tailored to the used state preparation algorithm. This step is taken to ensure that the required quantum resources are reduced to a minimal amount. 
In Section~\ref{sec:implementation_benchmarking}, we cover the implementation of our method in \textit{Qrisp}, and benchmark its performance and solution quality.
A further scaling advantage is verified by a small study of our approach regarding Barren Plateau susceptibility in Appendix~\ref{appendix:barren_plateaus}.
Finally, in Section~\ref{sec:outlook}, we propose several interesting directions for further improvement of our approach in future research.

\section{Product breakdown structure problem}\label{sec:pbs_problem}

A product breakdown structure in manufacturing can be defined by 
a rooted directed tree\footnote{Here, the directed edges point from the leaves to the root.} $T$ with set of nodes $[M]=\{1,\dotsc,M\}$. 
The nodes correspond to parts with the root being the final product. 

A directed edge is given by a tuple $(r,s)$ for $r,s\in [M]$ indicating that the part $r$ is a sub-part of $s$, e.g., $r$ is a predecessor of $s$.
Let $\phi$ denote the set of (directed) edges. 
The set of predecessors of a node is denoted by $\pred(x)$. Let $d_x=|\pred(x)|$ be the degree of the node $x$, and let $m=\max_{x\in T}d_x$ the maximum degree of all nodes.

The product breakdown structure problem \cite{Koopmans_1957, AirbusBMW} can be described as follows: For a set $[N]=\{1,\dotsc,N\}$ of sites, find an assignment
\begin{equation}
f\colon[M]\rightarrow [N]
\end{equation}
that satisfies the constraints
\begin{equation}
\label{eq:constraints}
f(r)\neq f(s),\quad\text{for all } (r,s)\in\{x\}\cup\pred(x),
\end{equation}
for all nodes $x\in T$ and minimizes the cost function
\begin{equation}
\label{eq:cost}
C=\sum_{(r,s)\in\phi}c^r_{f(r),f(s)}
\end{equation}
where $c^r_{i,j}>0$ are real numbers. We assume that $c^r_{i,j}=c^r_{j,i}$ for all $r,i,j$.
Here, $c_{i,j}^r$ represents the cost (or carbon emissions) of transporting part $r$ from site $i$ to site $j$.

With this, an instance of the PBS problem is characterized by a tuple $(T,\{c^x_{i,j}\},N)$. Here, the problem constraints are encoded by the tree $T$.

The constraints (\ref{eq:constraints}) can be equivalently formulated as:
\begin{enumerate}
\item[(i)] Origin and destination of each part must be different.
\item[(ii)] The origins of any two sub-parts of a part must be different.
\end{enumerate}

Clearly, if $N<m+1$, no assignment satisfying the constraints (\ref{eq:constraints}) exists.
We denote the set of feasible solutions by $\mathcal F\subset [N]^M$. 

Define the binary variables $x_{r,i}$ for $r=1,\dotsc,M$ and $i=1,\dotsc,N$. Then the cost function can be expressed as:
\begin{equation}
\label{eq:cost_binary}
C=\sum_{(r,s)\in\phi}\sum_{i\neq j}c^r_{i,j}x_{r,i}x_{s,j}.
\end{equation}
The additional constraint
\begin{equation}
\label{eq:C1}
C_1=\sum_{r\in [M]}\left(\sum_{i}x_{r,i}-1\right)^2=0\tag{C1}
\end{equation}
ensures that each part is assigned to exactly one site. 
The constraints (i) and (ii) can be expressed as:
\begin{equation}
\label{eq:C2}
C_2=\sum_{(r,s)\in\phi}\sum_{i}x_{r,i}x_{s,i}=0.\tag{C2}
\end{equation}
and
\begin{equation}
\label{eq:C3}
C_3=\sum_{(r,s)\in\psi}\sum_{i}x_{r,i}x_{s,i}=0.\tag{C3}
\end{equation}
where $\psi=\bigcup_x\{(r,s)\mid r,s\in\pred(x)\text{ and } r<s\}$ is the set of (ordered) pairs of sub-parts of any part.
This combination of objective and constraints can be formulated as a quadratic unconstrained binary optimization (QUBO) through the addition of penalty terms to the objective with appropriate positive values for the coefficients $\lambda_1$, $\lambda_2$, and $\lambda_3$:
\begin{equation}
\label{eq:QUBO_cost}
Q=C+\lambda_1C_1+\lambda_2C_2+\lambda_3C_3.
\end{equation}

\section{Quantum Approximate Optimization Algorithm}\label{sec:QAOA}

The Quantum Approximate Optimization Algorithm (QAOA) is a hybrid quantum-classical variational algorithm designed for solving combinatorial optimization problems \cite{Farhi_2014}. The quantum and classical components work together iteratively: The quantum computer repeatedly prepares a parameter dependent quantum state and measures it, producing a classical output; this output is then fed into a classical optimization routine which produces new parameters for the quantum part. This process is repeated until the algorithm converges to an optimal or near-optimal solution.

As a first step, an optimization problem is formulated as
\begin{equation}
    z=\text{argmin}_{x\in D}C(x),
\end{equation}
where $C(x)\colon D\rightarrow \R$ is the cost function acting on $N$
variables $x=(x_1,\dotsc,x_N)$, and $D\subset\Z^N$. Typical choices are $D=\{0,1\}^N$ or $D=\{0,\dotsc,k\}^N$ (graph coloring with $k$ colors). The domain $D$ is encoded by computational basis states of a system of qubits with Hilbert space $\H$.

The QAOA operates in a sequence of layers, each consisting of a problem-specific operator and a mixing operator. To be precise, an initial state $\ket{\psi_0}$ is evolved under the action of $p$ layers of QAOA, where one layer consists of applying the unitary phase separating operator
$$U_{\P}(C,\gamma)=e^{-i\gamma C}, $$ 
which applies a phase to each computational basis state based on its cost function value; 
and the unitary mixer operator 
$$U_{\M}(B,\beta)=e^{-i\beta B}, $$ 
where $B$ represents a specific mixer Hamiltonian that drives the transitions between different states. Depending on the unitaries' eigenvalues, the QAOA parameters are typically bound to hold values $\gamma\in\{0, 2\pi\},$ and $\beta\in\{0,\pi\}$, which are then optimized classically. 

After $p$ layers of QAOA, we can define the angle dependent quantum state
\begin{equation}
\ket{\psi_p} = U_{\M}(\beta_p)U_{\P}(\gamma_p)\cdots U_{\M}(\beta_1)U_{\P}(\gamma_1)\ket{\psi_0}, 
\end{equation}
where $U_{\M}(\beta)=U_{\M}(B,\beta)$ and $U_{\P}(\gamma)=U_{\P}(C,\gamma)$.

The ultimate goal in QAOA is to optimize the variational parameters $\beta_1,\dotsc,\beta_p$ and $\gamma_1,\dotsc,\gamma_p$ in order to minimize the expectation value of the cost function $C$ with respect to the final state $\ket{\psi_p}$. This is achieved using classical optimization techniques.

For constrained optimization problems the set of feasible solutions $\mathcal F\subset\H$ is a strict subset of the Hilbert space $\H$. 
In this case, reducing the search space by exploring only solutions that satisfy the constraints of a given problem should yield better results.
This is addressed in the scope of the Quantum Alternating Operator Ansatz \cite{Hadfield_2019}.
By design, the initial state is trivial (constant depth) to implement, and the mixer is required to preserve the feasible subspace.
Hadfield et.~al.~\cite{Hadfield_2019} provide an extensive list to such mixers for a variety of optimization problems including Max-$\kappa$-Colorable Subgraph or the Traveling Salesman Problem.

In this paper, we follow a slightly different strategy \cite{Bärtschi_2020,Matsuo_2023}: 
We construct an algorithm for preparing a uniform superposition of all feasible states. That is, this algorithm implements a unitary $U_F$ such that $\ket{\psi_F}=U_F\ket{0}$ where
$$\ket{\psi_F}=\frac{1}{\sqrt{\mathcal F}}\sum\limits_{x\in\mathcal F}\ket{x}.$$
Then the initial state for the QAOA is set to $\ket{\psi_0}=\ket{\psi_F}$.
As explained in Section~\ref{sec:mixers} below, given such a state preparation algorithm, a mixer that preserves the feasible subspace can be constructed. 

Intuitively, the former approach starts in some isolated solution and transitions to the optimal solution through successive mixer applications, while the latter starts in a superposition of all feasible solutions and concentrates on the optimal solution through successive mixer applications.
An interesting study would be a comparison of the two methods for different constrained optimization problems such as the TSP.

\section{Constrained Mixers}
\label{sec:mixers}
A major hurdle of solving problems like the PBS problem using a QUBO solver are the constraints. Even though constraints can be encoded by augmenting the cost function by a term of the form $\lambda F(x)$ (where $F(x) = 1$ if $x$ is forbidden by the constraints and $0$ otherwise), this approach brings several disadvantages with it that make it very impractical when it comes to solving actual problems:

\begin{itemize}
    \item Compared to a classical algorithm, the quantum algorithm has a much larger search space because it also has to search through the forbidden states.
    \item It is not clear how to choose $\lambda$. This parameter needs to be big enough, such that forbidden states are effectively suppressed but small enough such that the hardware precision can still “resolve” the contrast in the actual cost function.
    \item Forbidden states have a non-zero probability of appearing as a solution, reducing the overall efficiency of the algorithm.
\end{itemize}

An interesting technique to overcome these problems is to encode the constraints into the mixer, such that the algorithm never leaves the ``allowed'' space. This idea has been proposed in \cite{Bärtschi_2020} and will be further refined within this work.
Let
\begin{align}
    F: \mathbb{F}_2^n \rightarrow \mathbb{F}_2,\; x \rightarrow F(x)
\end{align}

represent an arbitrary constraint function. We say $x \in \mathbb{F}_2^n $ is allowed if $F(x) = 1$ and $x$ is forbidden otherwise.

We define the uniform superposition state
\begin{align}
    \ket{\psi_F} = \frac{1}{\sqrt{N_F}} \sum_{F(x) = 1} \ket{x}
\end{align}
where $N_F=|\{x\mid F(x)=1)\}|$.

Assume that there is a quantum circuit $U_F$ preparing $\ket{\psi_F}$ (a procedure for compiling these efficiently for the PBS problem will be presented in the next section):
\begin{align}
    U_F \ket{0} = \ket{\psi_F}
\end{align}
    
We now conjugate a multi-controlled phase gate controlled on the $\ket{0}$ state with $U_F$. This yields the following unitary:
    
\begin{align}
\notag U_{\M}(\beta) &= U_F \text{MCP}_0(\beta) U_F^\dagger \\
\notag &= U_F (1 - (1- \text{exp}(i \beta)) \ket{0}\bra{0}) U_F^\dagger \\
\notag &= 1 - (1- \text{exp}(i \beta)) (U_F \ket{0})(U_F \ket{0})^\dagger \\
&= 1 - (1- \text{exp}(i \beta)) \ket{\psi_F} \bra{\psi_F}
\end{align}
    
This quantum circuit satisfies the following properties, which classify it as a valid constrained QAOA mixer

\begin{itemize}
\item $U_{\M}(\beta) \ket{x} = \ket{x}$ if $F(x) = 0$ (follows directly from $\braket{\psi_F|x}  = 0$). This property makes sure that forbidden states are mapped onto themselves, guaranteeing that the mixer only mixes among the allowed states.
\item $U_{\M}(0) = 1$. This property ensures that there is indeed no mixing happening at $\beta = 0$.
\item $|\bra{x} U_{\M}(\beta) \ket{x}| \neq 1$ for $F(x) = 1, \beta \in (0, 2\pi)$. This property shows that there is indeed \textbf{some} mixing happening for allowed states at $\beta \neq  0 $.
\end{itemize}

\subsection{Benefits of constrained mixers}

As it turns out, the constrained mixer approach improves the optimization process significantly.
Firstly, since the state space dimension is heavily reduced (compared to encoding the constraints in the cost function), the algorithm has to search through a much smaller solution space. This naturally improves the efficiency and scalability.
Furthermore, the obtained state space reduction also mitigates the well-known Barren Plateau phenomenon \cite{McClean_2018}, which depends on the Hilbert space dimension and negatively affects a lot of Variational Quantum Algorithms. 
Consider a generic parameterized quantum circuit:

 \[
    U(\boldsymbol{\theta}) = \prod_{k=1}^L U_k(\boldsymbol{\theta}_k)
 \]
 acting on a Hilbert space $\mathcal{H}=(\mathbb{C}^2)^{\otimes n}$ and a cost function   $C(\boldsymbol{\theta})=Tr[OU(\boldsymbol{\theta})\rho U^{\dagger}(\boldsymbol{\theta})]
$, where $O$ is a Hermitian observable, $\boldsymbol{\theta}$ the set of parameters, $L$ the number of layers, and $\rho$ the initial state. Barren plateaus manifest when $\mathrm{Var}_{\theta}(\partial_{\theta}C(\theta))$
vanishes exponentially with the number of qubits, and thus, the problem size.
In Appendix~\ref{appendix:barren_plateaus} we show some experiments to visualize how the proposed methodology mitigates the exponential vanishing of $\mathrm{Var}_{\theta}(\partial_{\theta}C(\theta))$ and thus, Barren Plateaus. Therefore, we demonstrate the connection between the state space reduction achieved with the constrained mixers and the trainability of the variational algorithm.

\section{State preparation}
\label{sec:state_preparation}

In this part, we present a recursive algorithm for preparing a superposition state of all feasible solutions 
to a PBS problem defined by a tree $T$, and number of sites $N$. This algorithm generalizes the methods for preparing permutation states -- for solving the TSP -- proposed in \cite{Bärtschi_2020,Matsuo_2023}.

The state is represented by a \texttt{QuantumArray q\_array} that consists of $M$ \texttt{QuantumVariables} representing the assigned site for each part as a one-hot encoded integer. That is, each such \texttt{QuantumVariable} consists of a register of $N$ qubits, and sites are encoded by the states where exactly one qubit is $\ket{1}$. This corresponds to an $N$-qubit $W$-state $\ket{W_N}=\frac{1}{\sqrt{N}}(\ket{10\dotsc 0}+\ket{010\dotsc 0}+\ket{00\dotsc 1})$. The total number of qubits required is $M\cdot N$.
In the absence of the additional constraints defined by the tree $T$, a superposition of all feasible solutions corresponds to a tensor product $\bigotimes_{r\in [M]}\ket{W_N}$. In the following, we incorporate the constraints defined by the tree $T$.

As a subroutine, we utilize the method $W(\texttt{qv},n)$ that takes a \texttt{QuantumVariable} of $N$ qubits and an integer $n\leq N$ as inputs, and prepares a partial $W$-state. That is, the first $n$ qubits of $\texttt{qv}$ are in the state $\ket{W_n}$ and the last $N-n$ qubits are in state $\ket{0}$. This method can be implemented as described in Algorithm~\ref{alg:partial_W} with $\mathcal O(n)$ gates in depth $\mathcal O(n)$. Note that Algorithm~\ref{alg:partial_W} employs the $\XXYY$-gate which acts as a ``continuous swap'' depending on the angle $\phi$.
This method also enables the preparation of non-uniform $W$-states depending on the angles $\phi_i$.
With this, non-uniform superpositions of feasible solutions can be prepared. 

Considering this, the state preparation is described in Algorithm~\ref{alg:prepare_pbs}. It employs Algorithm~\ref{alg:add_predecessors}, \ref{alg:make_distinct} as subroutines. The desired superposition state of all feasible solutions is produced recursively:
Algorithm~\ref{alg:prepare_pbs} initializes the \texttt{QuantumVariable} corresponding to the root of the tree $T$ in state $\ket{W_N}$, and subsequently applies Algorithm~\ref{alg:add_predecessors} to the root. Algorithm \ref{alg:add_predecessors}, when applied on a node $x\in T$, initializes the \texttt{QuantumVariables} corresponding to its predecessors $y\in\pred(x)$ in partial $W$-states of decreasing size, and employs Algorithm \ref{alg:make_distinct} to produce an entangled state of the initialized \texttt{QuantumVariables} that satisfies the constraints (i) and (ii). Finally, Algorithm~\ref{alg:add_predecessors} is called recursively on the nodes $y\in\pred(x)$.

A first inspection shows that this method can be implemented with $\mathcal O(M^2\cdot N)$ gates and $\mathcal O(M^2\cdot N)$ depth, with the dominant contribution resulting from enforcing the constraints (ii). 
A detailed analysis of this step reveals that (similarly to in \cite{Bärtschi_2020}) stair-shaped groups of non-overlapping controlled swap gates can be executed in parallel reducing the actual circuit depth to $\mathcal O(M\cdot N)$. Additionally, experiments indicate that the circuit depth after processing by the \textit{Qrisp} compiler might even scale linearly, i.e., as $\mathcal O(N+M)$. 

\begin{algorithm}[h]
\caption{Prepare PBS state}\label{alg:prepare_pbs}

\KwIn{A PBS tree $T$, the root $root$, the number of sites $N$, and a \texttt{QuantumArray} \texttt{q\_array} in state $\ket{0}$.}
\KwOut{A uniform superposition of all feasible solutions.}

\tcp{Prepare $W$-state for root}
$W(q\_array[root], N)$\;

\tcp{Recursively add predecessors}
Alg.~\ref{alg:add_predecessors}$(T, root, N, \texttt{q\_array})$\;

\KwRet \texttt{q\_array}
\end{algorithm}

\begin{algorithm}[h]
\caption{Add predecessors}\label{alg:add_predecessors}

\KwIn{A PBS tree $T$, a node $x$, the number of sites $N$, and a \texttt{QuantumArray} \texttt{q\_array}.}

\tcp{Sorted list of predecessors of node $x$}
$P \gets \pred(x)$\;
$m \gets |P|$\;

\tcp{Prepare partial $W$-states for predecessors}
\For{$i=0$ \KwTo $m-1$}{
    $W(q\_array[P[i]], N-1-i)$\;
}

\tcp{Create entangled state of predecessors that satisfies constraints (ii)}
\For{$i=m-2$ \KwTo $0$ \KwBy $-1$}{
    \For{$j=i+1$ \KwTo $m-1$}{
        Alg.~\ref{alg:make_distinct}$(\texttt{q\_array}[P[i]],\texttt{q\_array}[P[j]],N-2-i)$\;
    }
}

\tcp{Create entangled state of predecessors and $x$ that satisfies constraints (i)}
\For{$i=0$ \KwTo $m-1$}{
    Alg.~\ref{alg:make_distinct}$(\texttt{q\_array}[x],\texttt{q\_array}[P[i]],N-1)$\;
}

\tcp{Recursively add predecessors}
\For{$i=0$ \KwTo $m-1$}{
    Alg.~\ref{alg:add_predecessors}$(T, P[i], N, \texttt{q\_array})$\;
}
\end{algorithm}

\begin{algorithm}[h]
\caption{Make distinct}\label{alg:make_distinct}

\KwIn{\texttt{QuantumVariables qv1, qv2} in states $W_{n+1}\otimes\ket{0}$, $W_n\otimes\ket{0}$, respectively, and the index $n$.}
\KwOut{An entangled state $\sum_{i\neq j}^n\ket{\underline{i}}_1\otimes\ket{\underline{j}}_2$, where $\ket{\underline{i}}_1$ is the state of \texttt{qv1} with exactly one $1$ at position $i$.}

\For{$i=0$ \KwTo $n-1$}{
    \Control{$\texttt{qv1}[i]$}{
        \KwSwap $\texttt{qv2}[n]$, $\texttt{qv2}[i]$\;
    }
}
\KwRet \texttt{qv1, qv2}
\end{algorithm}

\begin{algorithm}[h]
\caption{Partial $W$-state}\label{alg:partial_W}

\KwIn{\texttt{QuantumVariable qv} in state $\ket{0}$, and an index $n$.}
\KwOut{A partial $W$-state $W_n\otimes\ket{0}$.}

\tcp{Flip the first qubit}
\KwX \texttt{qv}[0]

\tcp{Move the 1 by ``continuous swap'' with angles $\phi_i$ to be determined}
\For{$i=1$ \KwTo $n-1$}{
    \KwXXYY $\phi_i$, $\texttt{qv}[i]$, $\texttt{qv}[0]$\;
}
\KwRet \texttt{qv}
\end{algorithm}

\subsection{Reducing mixer requirements}\label{sec:reducing_mixer_requirements}

One disadvantage of the mixer presented in Section~\ref{sec:mixers} is the fact that a very large multi-controlled phase gate is required. If the PBS state is realized on $M \cdot N$ qubits, this MCP gate would require a depth of $\mathcal{O}(\log({M\cdot N}))$ and a gate count of $\mathcal{O}(M\cdot N)$ by using the MCX technique presented in \cite{Balauca_2022}. This can be reduced by observing an interesting property of the state preparation algorithm.

Let $\H=(\C^2)^{\otimes M\times N}$ be the Hilbert space of the $M\times N$ qubit system representing the PBS. A computational basis state $\ket{x}\in\H$ is \textit{feasible} if and only if it satisfies the constraints \ref{eq:C1}, \ref{eq:C2}, \ref{eq:C3}. Let $\H^+$ denote the Hilbert space spanned by the feasible basis states, and $\H^-$ denote the Hilbert space spanned by the infeasible basis state.

Recall that a state $\ket{x}\in\H$ is represented by a \texttt{QuantumArray} $\texttt{q\_array}=[\qv_0,\dotsc,\qv_{M-1}]$ with $M$ \texttt{QuantumVariables}. We write $\ket{\qv}=\ket{\underline{i}}$ where $\ket{\underline{i}}$ is the state of $N$ qubits with exactly one $1$ at position $i$.

\begin{theorem}
\label{thm:state_init}
Let $U_{\text{prep}}\colon\H\rightarrow\H$ be the unitary corresponding to the state preparation as described in Algorithm~\ref{alg:prepare_pbs}. Let $\ket{x}\in\H$ be a computational basis state, and
\begin{align}
    \ket{y} = U_{\text{prep}} \ket{x}.
\end{align}
Then we have either $\ket{y}\in\H^+$ or $\ket{y}\in\H^-$. That is, the state $\ket{y}$ is either a superposition of feasible states or a superposition of infeasible states. In the latter case, 
this implies $\bra{z} U_{\text{prep}} \ket{x} = 0$ for any feasible state $\ket{z}$.
\end{theorem}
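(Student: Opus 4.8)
The plan is to reduce the statement to two facts about the elementary gates composing $U_{\text{prep}}$, the first of which disposes of the ``infeasible by \ref{eq:C1}'' case at once, and then to induct along the recursion of Algorithm~\ref{alg:add_predecessors} for what remains. For Step~1, write $U_{\text{prep}}$ out in elementary gates: it consists of the single $X$ on $\qv_r[0]$ at the top of each call of Algorithm~\ref{alg:partial_W}, the $\XXYY$ gates of Algorithm~\ref{alg:partial_W}, and the controlled swaps of Algorithm~\ref{alg:make_distinct}. The $\XXYY$ interaction couples only $\ket{01}$ with $\ket{10}$ and fixes $\ket{00},\ket{11}$, and is always applied to two qubits of the same \texttt{QuantumVariable}; a controlled swap permutes two qubits of a single \texttt{QuantumVariable}. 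Hence every gate except the $X$'s preserves the Hamming weight of each \texttt{QuantumVariable}, and tracing Algorithms~\ref{alg:prepare_pbs}--\ref{alg:partial_W} shows the $X$ on $\qv_r[0]$ is applied exactly once, as the very first gate touching $\qv_r$, i.e. to the input value of $\qv_r$. Therefore every computational-basis term of $\ket{y}=U_{\text{prep}}\ket{x}$ has the same weight profile $(w_1,\dots,w_M)$, with $w_r$ the weight of $\qv_r$ in $\ket{x}$ raised or lowered by $1$ according to whether $\qv_r[0]$ is initially $0$ or $1$. If some $w_r\neq1$, every term violates \ref{eq:C1}, so $\ket{y}\in\H^-$, and the final assertion is immediate since $\H^+\perp\H^-$.

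It remains to handle the case $w_r=1$ for all $r$, where each term of $\ket{y}$ is a genuine site assignment encoded by one-hot \texttt{QuantumVariables}; write $p_r$ for the position of the $1$ in $\qv_r$. I would show, by induction on the recursion of Algorithm~\ref{alg:add_predecessors}, that these assignments are, with respect to \ref{eq:C2} and \ref{eq:C3}, either all feasible or all infeasible. The invariant to carry is: after the block of $U_{\text{prep}}$ handling the subtree rooted at a node $v$ is applied -- to a state in which $\qv_v$ is a superposition of one-hot states and all other \texttt{QuantumVariables} of the subtree are still in their (product, basis-state) input configuration -- the restriction of the state to the subtree's \texttt{QuantumVariables} is, in every branch, a one-hot assignment, and the constraints \ref{eq:C2}, \ref{eq:C3} internal to the subtree are satisfied in every branch or violated in every branch. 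A crucial extra piece of bookkeeping is to record, for each \texttt{QuantumVariable}, whether its $1$ lies inside the ``window'' $\{0,\dots,n-1\}$ over which its call of Algorithm~\ref{alg:partial_W} spreads amplitude, or is ``stuck'' at a position $\ge n$: by Step~1 the $\XXYY$ gates never touch the latter positions, so this status is a function of $\ket{x}$ alone, the same in every branch, and it is exactly what governs the behaviour of the subsequent controlled swaps. The base case (a leaf: Algorithm~\ref{alg:add_predecessors} does nothing) is trivial.

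The inductive step is where I expect the real work to lie. One analyses the two loops of Algorithm~\ref{alg:add_predecessors} on one-hot superpositions using the observation that a single call of Algorithm~\ref{alg:make_distinct} acts branchwise in one of three ways: if the $1$ of the control variable lies strictly below the index $n$ it forces the two positions to differ (sending the target's $1$ to $n$, or away from $n$, or leaving it); if that $1$ lies at or above $n$ it leaves the target untouched, so the positions differ afterwards exactly when they did before; and in each case the window/stuck flag of the affected variable is updated in a prescribed way. Since the deciding data -- the weights, the window/stuck flags, and the comparison of the control's $1$-position with $n$ -- are all branch-independent given the induction hypothesis and Step~1, chaining these calls through the sibling loop and then the root loop of Algorithm~\ref{alg:add_predecessors}, and composing with the recursive calls on the predecessors via the invariant, shows that whether the internal constraints of the enlarged subtree hold is the same in every branch. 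The main obstacle is precisely this propagation: establishing the three-way case analysis for Algorithm~\ref{alg:make_distinct}, keeping the window/stuck accounting consistent through the nested loops, and verifying that no branch-dependent split of feasibility can be introduced (the design of the algorithm being exactly what makes this go through). Applying the invariant at the root of $T$ then gives $\ket{y}\in\H^+$ or $\ket{y}\in\H^-$, which is the claimed dichotomy; the last sentence of the theorem again follows from $\H^+\perp\H^-$.
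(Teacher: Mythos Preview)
Your Step~1 is correct and in fact cleaner than the paper's treatment: the Hamming-weight invariance of the $\XXYY$ gates and controlled swaps disposes of all \ref{eq:C1} violations at once. The inductive framework you set up for Step~2 is also the right shape, but your case analysis of Algorithm~\ref{alg:make_distinct} contains a slip that matters. You write that when the control's $1$ lies strictly below $n$, the call ``forces the two positions to differ (sending the target's $1$ to $n$, or away from $n$, or leaving it)''. The middle sub-case is wrong: if the target's $1$ sits at position exactly $n$ and the control is at some $c<n$, the controlled swap moves the target's $1$ from $n$ to $c$, producing a \emph{collision}, not distinctness. This is not a harmless edge case; it is precisely the mechanism by which a stuck variable forces infeasibility, and it is the heart of the paper's proof. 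The paper picks the first sibling $s_k$ whose $1$ is stuck at some $i\ge N-1-k$, identifies the sibling $s_l$ with $l=N-2-i$ whose call to Algorithm~\ref{alg:make_distinct} uses index exactly $i$, and shows that this call collapses the target onto the control in every branch, creating a \ref{eq:C3} (or \ref{eq:C2}, if $i=N-1$) violation that then persists under all later swaps.

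A second issue: you assert that ``the comparison of the control's $1$-position with $n$'' is branch-independent. It is not. The control $P[i]$ has window $\{0,\dots,N-2-i\}$ while the call uses index $n=N-2-i$, so the control can sit at position $n$ in some branches and at positions $<n$ in others; whether the CSWAP fires is genuinely branch-dependent. What \emph{is} branch-independent is the window/stuck status and, crucially, the resulting feasibility relation: when the control is in its window and the target is either in its running range $\{0,\dots,n-1\}$ or stuck at exactly $n$, the post-call relation (equal or distinct) comes out the same in every branch regardless of where in $\{0,\dots,n\}$ the control lands. So your invariant can be salvaged, but the justification must track the feasibility outcome rather than the CSWAP trigger. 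Once these two points are repaired, your inductive scheme is a more systematic packaging of the paper's direct case analysis; the paper instead argues by locating one explicit collision and checking it survives all subsequent gates.
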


\begin{proof}
To see why this is true, let the \texttt{QuantumVariable} \texttt{qv} be an entry of the \texttt{QuantumArray q\_array} representing  $\ket{x}$ that contains at least a single \texttt{1}. Let $\nu(\qv)$ be the number of \texttt{1}s in \texttt{qv}.

The partial $W$-state initialization (i.e., Algorithm~\ref{alg:partial_W}) flips the first qubit and then performs some ``continuous swaps'' to move the newly created \texttt{1} around. Applying partial $W$-state initialization to \texttt{qv} will result in a superposition of states all satisfying either of the following cases: 

\begin{itemize}
    \item zero \texttt{1}s if $\nu(\qv)=1$ and $\qv[0]=1$,
    \item as least two \texttt{1}s if $\nu(\qv)>2$, or $\nu(\qv)=2$ and $\qv[0]\neq 1$, or $\nu(\qv)=1$ and $\qv[0]\neq 1$,
    \item one \texttt{1} if $\nu(\qv)=2$ and $\qv[0]=1$.
\end{itemize}

In the first two cases, the resulting entry is therefore no longer a valid one-hot encoded integer (the condition \ref{eq:C1} is violated). The remaining steps of the PBS state preparation algorithm consist mostly of swapping the qubits around within \texttt{qv}, implying the final value of \texttt{qv} is also not a valid one-hot encoded integer.

In the third case, let $s\in T$ be the node in the PBS tree $T$ corresponding to the \texttt{QuantumVariable} $\qv$, i.e., $\qv=\texttt{q\_array}[s]$. 

If $s$ is the root of $T$, the $W$-state initialization yields 
\begin{equation}
\ket{\qv}=\alpha_0\ket{\underline{0}}+\alpha_i\ket{\underline{i}}
\end{equation} 
for some $0<i<N$. This is a non-uniform $W_N$-state and does not yield any violation of constraints.

Otherwise, let $r\in T$ be the node such that $s\in\pred(r)=\{s_0,\dotsc,s_{m-1}\}$, i.e., $s=s_k$ is a predecessor of $r$. Now, let $0<i<N-1$ be the index such that $\qv[i]=1$. 
We may assume that all \texttt{QuantumVariables} corresponding to the nodes in $r\cup\pred(r)$ are either in state $\ket{0}$ or as in the third case.

If $i<N-1-k$, the partial $W$-state initialization yields 
\begin{equation}
\ket{\qv}=\alpha_0\ket{\underline{0}}+\alpha_i\ket{\underline{i}}
\end{equation} 
This is a non-uniform $W_{N-1-k}$-state and does not yield any violation of constraints.

If $i\geq N-1-k$, the partial $W$-state initialization yields 
\begin{equation}
\ket{\qv}=\ket{\underline{i}} 
\end{equation}
We may assume that $0\leq k\leq m-1$ is the smallest index such that this case occurs.

If $i\leq N-2$, let $l=N-2-i$ (in particular, $l<k$) and $\qv'=\texttt{q\_array}[s_l]$ be the \texttt{QuantumVariable} corresponding to node $s_l$. Then the partial $W$-state initialization yields
\begin{equation}
\ket{\qv'}=\sum_{a=0}^{i}\alpha_a\ket{\underline{a}}'
\end{equation}

Algorithm~\ref{alg:make_distinct} applied to $(\qv',\qv,i)$ yields
\begin{equation}
\sum_{a=0}^{i}\alpha_a\ket{\underline{a}}'\otimes\ket{\underline{i}}\rightarrow\sum_{a=0}^{i}\alpha_a\ket{\underline{a}}'\otimes\ket{\underline{a}}
\end{equation}
which violates the constraint \ref{eq:C3} (parts $s_k$ and $s_l$ must not be at the same site). Further applications of Algorithm~\ref{alg:make_distinct} corresponding to nodes $s_m$ for $m<l$ or $r$ will move the \texttt{1}s of $\qv'$ and $\qv'$ to the same position, so that the constraint remains violated. 
If $i=N-1$, we set $\qv'=\texttt{q\_array}[r]$ and the proof is similar. In this case, the condition \ref{eq:C2} is violated.

\end{proof}

Instead of applying one large MCP gate to all qubits, we use $M$ MCP gates spanning $N$ qubits (one for each array entry). 
Thereby, the depth is reduced to $\mathcal{O}(\log(N))$. 
We have to show that this mixer still preserves the space of feasible states $\H^+$.
The aforementioned circuit gives us the unitary

\begin{align}
U_{\text{tag}}(\phi)   &= \sum_{x \in \{0,1\}^{M \times N}} \exp\left(i \frac{\omega(x) \phi}{M}\right)\ket{x}\bra{x}
\end{align}
where $\omega(x)$ is the function that counts how many 0 entries $x$ contains when interpreted as a quantum array of $M$ entries.
We now define the sets:
\begin{align}
    A &= \{ x \in \{0,1\}^{M \times N} \mid \omega(x) = 0 \}\\
    B &= \{ x \in \{0,1\}^{M \times N} \mid \omega(x) = M \}\\
    C &= \{ x \in \{0,1\}^{M \times N} \mid 0 < \omega(x) < M \}
\end{align}
Rewriting $U_{\text{tag}}$ gives:
\begin{align}
    \notag U_{\text{tag}}(\phi)   &= \sum_{x \in A} \ket{x}\bra{x}\\
        \notag &+ \sum_{x \in B} \exp(i \phi) \ket{x}\bra{x}\\
        \notag &+ \sum_{x \in C} \exp\left(i \frac{\omega(x) \phi}{M}\right) \ket{x}\bra{x}\\
        \notag &= \sum_{x \in \{0,1\}^{M \times N}} \ket{x}\bra{x}\\
        \notag &+ (\exp(i \phi) - 1) \ket{0}\bra{0}\\
        \notag &+ \sum_{x \in C} \left(\exp\left(i \frac{\omega(x) \phi}{M}\right)-1\right)\ket{x}\bra{x}\\
        &= \text{MCP}_0(\phi) + V(\phi)
\end{align}
In the last step, we identify the unitary of the MCP gate with the first two terms and denote the last term by $V(\phi)$.
If we now conjugate this unitary with the state preparation, we get
\begin{align}
    U_{\text{prep}} U_{\text{tag}}(\phi) U_{\text{prep}}^\dagger = U_{\text{mixer}}(\phi) + U_{\text{prep}} V(\phi) U_{\text{prep}}^\dagger
\end{align}


Let $\ket{\psi}$ be an arbitrary superposition of states from $\H^+$, that is, 
\begin{align}
    \ket{\psi} = \sum_{z \in \H^+} \alpha_z \ket{z}
\end{align}

Since the PBS state is of such a form the following is especially true for it:
\begin{align}
    \notag & U_{\text{prep}} V(\phi) U_{\text{prep}}^\dagger \ket{\psi}\\ 
    \notag &= \sum_{z \in D} \alpha_z U_{\text{prep}} V(\phi) U_{\text{prep}}^\dagger \ket{z}\\
    &=\sum_{z \in D} \sum_{x \in C} \alpha_z \beta_x(\phi) U_{\text{prep}}\ket{x}\bra{x} U_{\text{prep}}^\dagger \ket{z}\in\H^+
\end{align}
Here, we use that by Theorem~\ref{thm:state_init}, $U_{\text{prep}}\ket{x}$ lies either in $\H^+$ or $\H^-$, and in the latter case
\begin{align}
\bra{x} U_{\text{prep}}^\dagger \ket{z}= (\bra{z} U_{\text{prep}} \ket{x})^*= 0,
\end{align}
since $z\in\H^+$.

\section{Implementation and Benchmarking}\label{sec:implementation_benchmarking}

The described method for solving the PBS problem is implemented in \textit{Qrisp} leveraging the existing QAOA functionalities \cite{Osaba_2024} including the Trotterized Quantum Annealing (TQA) \cite{Sack_2021} parameter initialization protocol. 

To summarize, the constrained QAOA exhibits the following scaling in terms of gate count and circuit depth:

\begin{itemize}
    \item The state preparation is implemented with a few lines of code as described in Section~\ref{sec:state_preparation}. As mentioned in Section~\ref{sec:state_preparation}, it requires $\mathcal O(M^2\cdot N)$ gates and circuit depth $\mathcal O(M\cdot N)$, and utilizes $M\cdot N$ qubits for encoding a superposition of all feasible solutions. 
    
    \item The implementation of the constrained mixer, i.e., the mixer operator $U_{\M}$, is shown below. Its complexity is dominated by the contribution from the state preparation that it employs.

    \item The quantum cost operator, i.e., the phase separating operator $U_{\P}$, is generated from a \texttt{SymPy} polynomial for the classical cost function \ref{eq:cost}. This operator requires $\mathcal O(M^2\cdot N)$ gates and circuit depth $\mathcal O(M\cdot N)$.
    
\end{itemize}

In total, for a QAOA with $p$ layers our algorithm utilizes $\mathcal O(M\cdot N)$ qubits and requires $\mathcal O(p\cdot M^2\cdot N)$ gates and circuit depth $\mathcal O(p\cdot M\cdot N)$.

Essentially, the method is implemented in {\it Qrisp} as follows (here, we omit showing the implementation of straightforward classical functions):

\begin{lstlisting}[language = Python, numbers = none]
from qrisp import *

# Encode solutions as one-hot encoded integers
qtype = OHQInt(N)
q_array = QuantumArray(qtype = qtype, shape = (M))

"""
    Generate a SymPy polynomial and an orderd list of its symbols for the classical cost function for a
    dictionary of cost coefficients, number of parts M, number of sites N, and PBS tree.
"""
C, symbs = cost_symp(coeffs, M, N, PBS_tree)


"""
    Generate a classical cost function that can be evaluated on a dictionary of measurement results.
"""
cl_cost = cost_function(coeffs, PBS_tree)

# Generate quantum cost operator
def cost_op(q_array, gamma):
    app_sb_phase_polynomial(q_array, C, symbs, t=-gamma)

"""
    Generate the state preparation function for a given PBS tree and number of sites N.
    This function creates a uniform superposition state of all feasible solutions when applied to q_array.
"""
init_func = pbs_state_init(PBS_tree, N)

# Inverse state preparation function
def inv_init_func(q_array):
    with invert():
        init_func(q_array)

# Define mixer operator
def mixer_op(q_array, beta):
    with conjugate(inv_init_func)(q_array):
        for i in range(len(q_array)):
            mcp(beta, q_array[i], ctrl_state = 0)

\end{lstlisting}

With this, we have all ingredients to define the QAOA problem:

\begin{lstlisting}[language = Python, numbers = none]
from qrisp.qaoa import *

# Define QAOA problem
qaoaPBS = QAOAProblem(cost_operator=cost_op,
                        mixer=mixer_op, 
                        cl_cost_function=cl_cost, 
                        init_type='tqa')
# Initialize QAOA in uniform superposition state of all feasible solutions
qaoaPBS.set_init_function(init_func)
# Number of QAOA layers
depth = 3
# Run QAOA
res = qaoaPBS.run(q_array, depth,  max_iter = 100)

\end{lstlisting}

\smallskip

We benchmark our method on a set of PBS instances \cite{AirbusBMW}. All experiments are performed with the \textit{Qrisp} simulator on an M2 MacBook Pro. 
The results are shown in Figures~\ref{fig:benchmark_1}, \ref{fig:benchmark_2}.

\begin{figure*}
    \begin{subfigure}[]{1.0\textwidth}
        \centering
        \includegraphics[scale=0.55]{"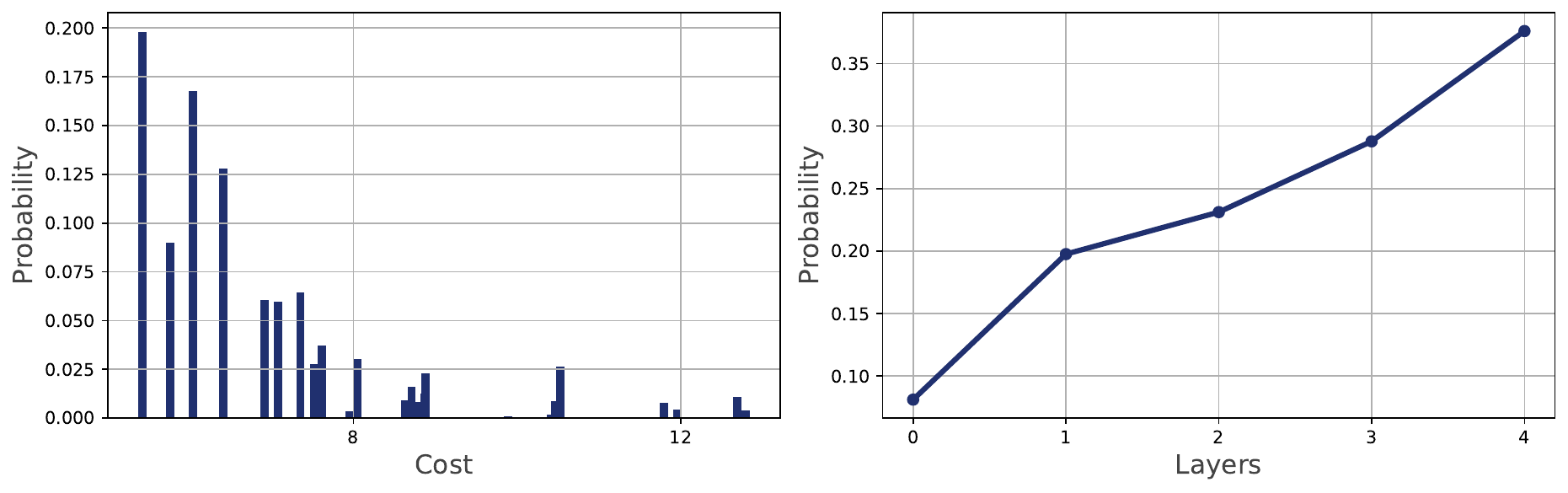"}
        \subcaption{$\phi=\{(1,0),(2,0),(3,0)\}$, $N=4$, $|\mathcal F|=24$.}
        \label{fig:benchmark_1a}
    \end{subfigure}
    
    \vspace{1em}

    \begin{subfigure}[]{1.0\textwidth}
        \centering
        \includegraphics[scale=0.55]{"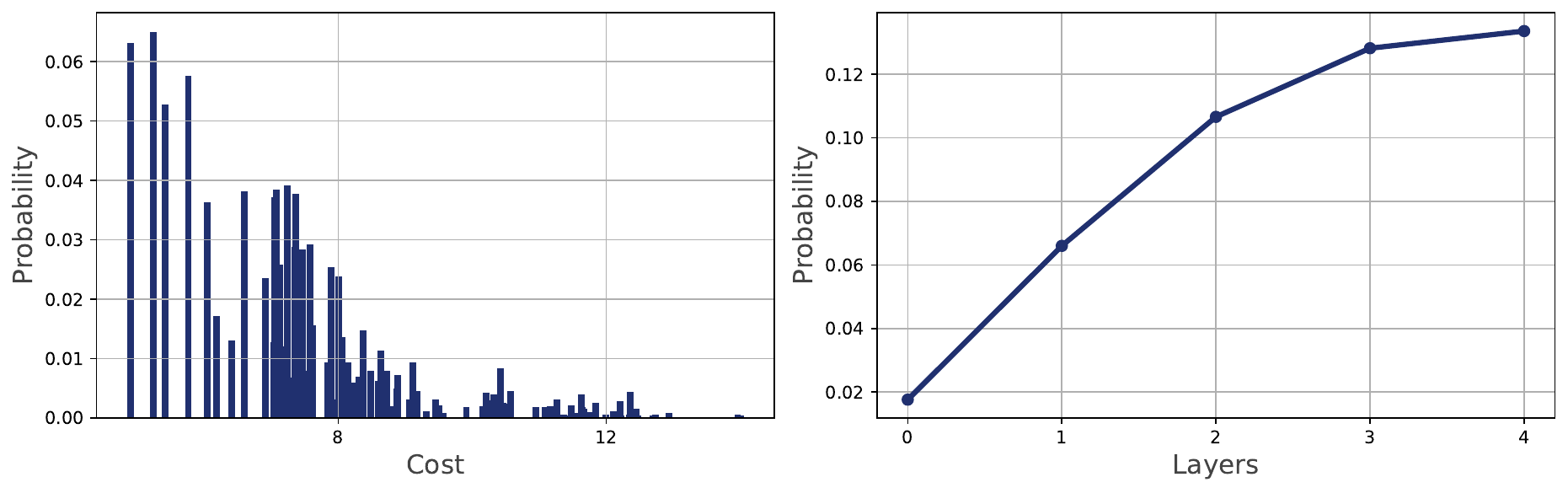"}
        \subcaption{$\phi=\{(1,0),(2,0),(3,0)\}$, $N=5$, $|\mathcal F|=120$.}
        \label{fig:benchmark_1b}
    \end{subfigure}

    \vspace{1em}

    \begin{subfigure}[]{1.0\textwidth}
        \centering
        \includegraphics[scale=0.55]{"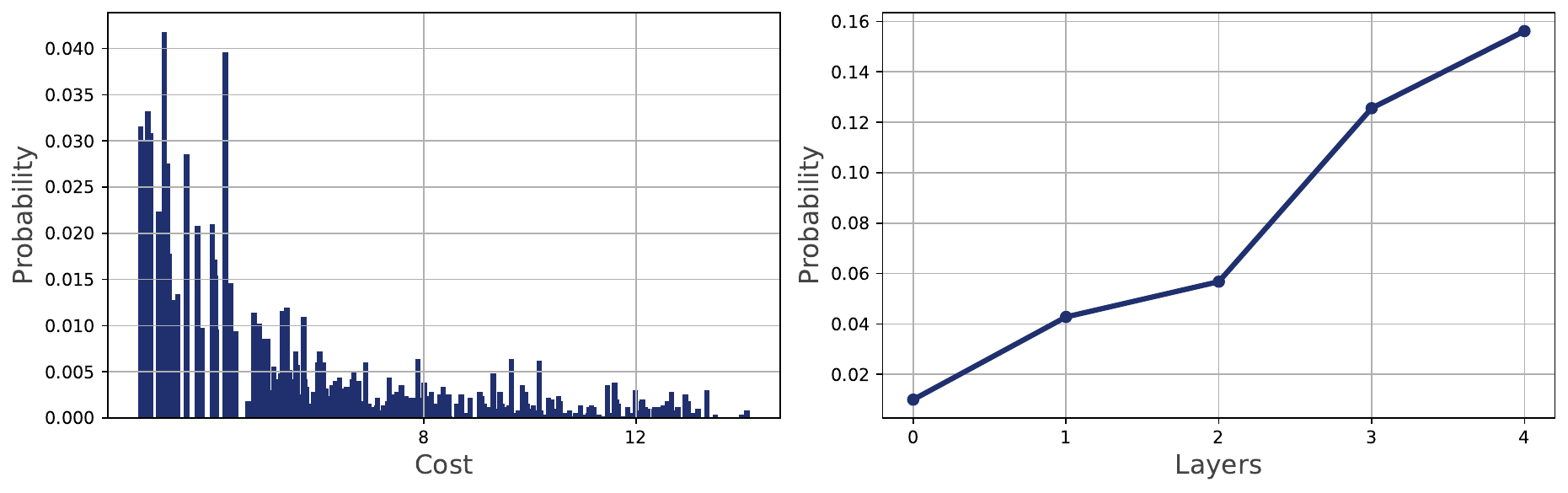"}
        \subcaption{$\phi=\{(1,0),(2,0),(3,0)\}$, $N=6$, $|\mathcal F|=360$.}
        \label{fig:benchmark_1c}
    \end{subfigure}

    \caption{Left: Aggregated probabilities of measuring a feasible solution $f$ with cost $C(f)$ for QAOA with $3$ layers. Right: Success probability $P_{0.1}$ of measuring a feasible solution $f$ with cost $C(f)<1.1\cdot C_{\min}$ for QAOA with a varying number of layers. The case $0$ layers corresponds to a uniform superposition of all feasible solutions.}
    \label{fig:benchmark_1}
\end{figure*}

\begin{figure*}
    \begin{subfigure}[]{1.0\textwidth}
        \centering
        \includegraphics[scale=0.55]{"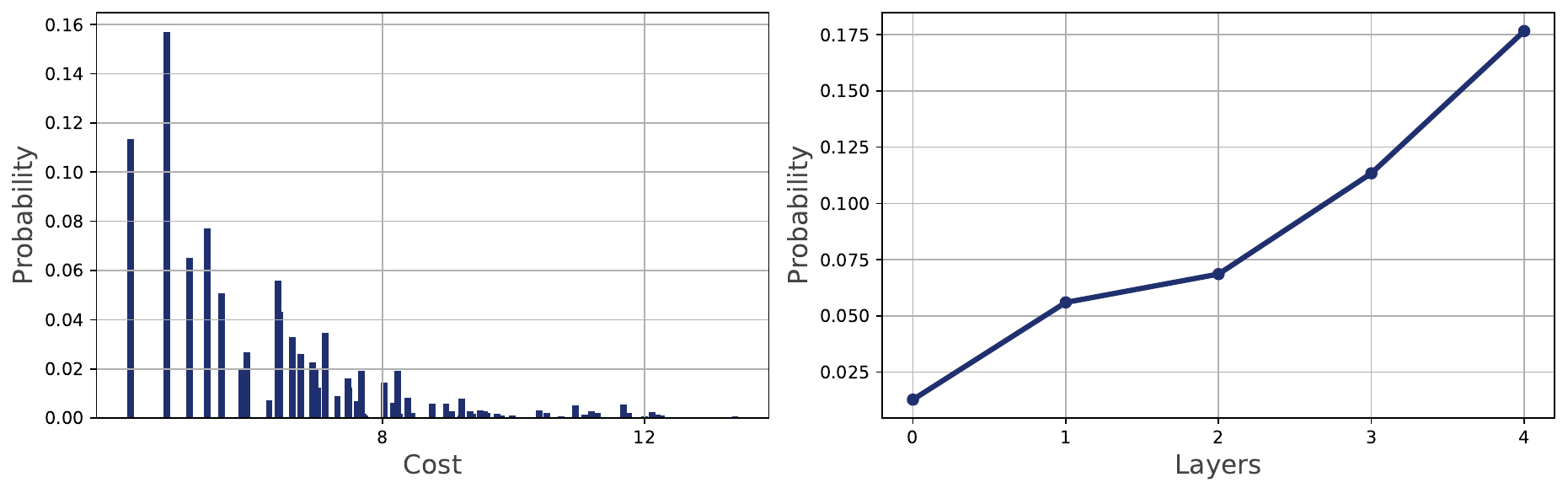"}
        \subcaption{$\phi=\{(1,0),(2,0),(3,1)\}$, $N=4$, $|\mathcal F|=72$.}
        \label{fig:benchmark_2a}
    \end{subfigure}
    
    \vspace{1em}

    \begin{subfigure}[]{1.0\textwidth}
        \centering
        \includegraphics[scale=0.55]{"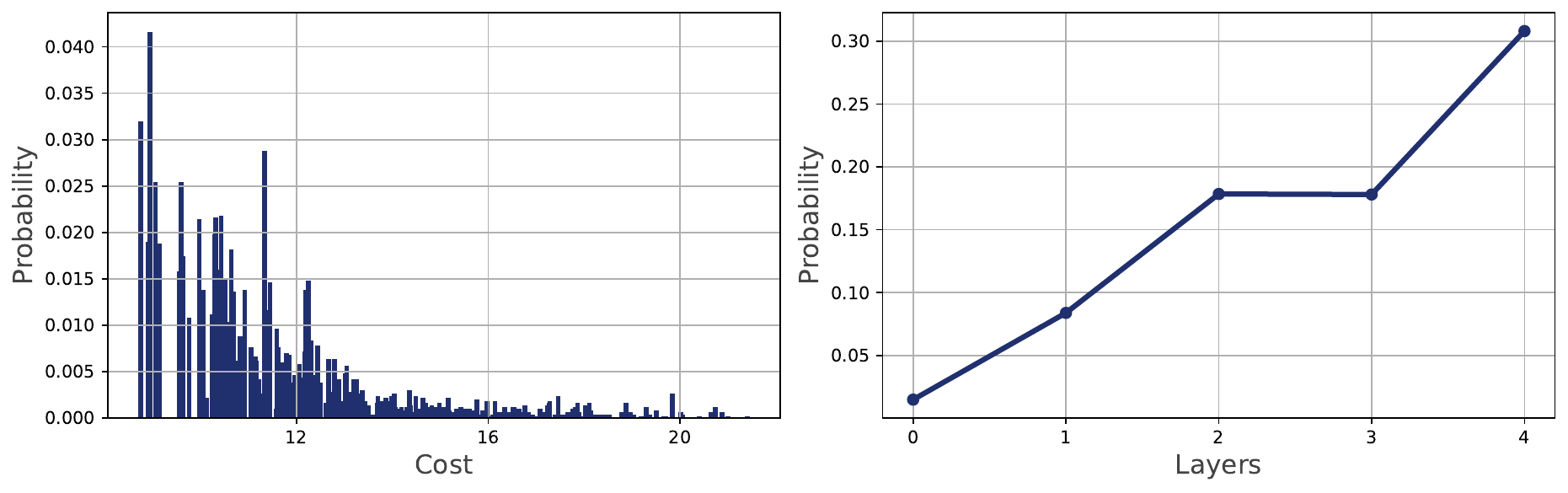"}
        \subcaption{$\phi=\{(1,0),(2,0),(3,0),(4,1)\}$, $N=5$, $|\mathcal F|=480$.}
        \label{fig:benchmark_2b}
    \end{subfigure}

    \vspace{1em}

    \begin{subfigure}[]{1.0\textwidth}
        \centering
        \includegraphics[scale=0.55]{"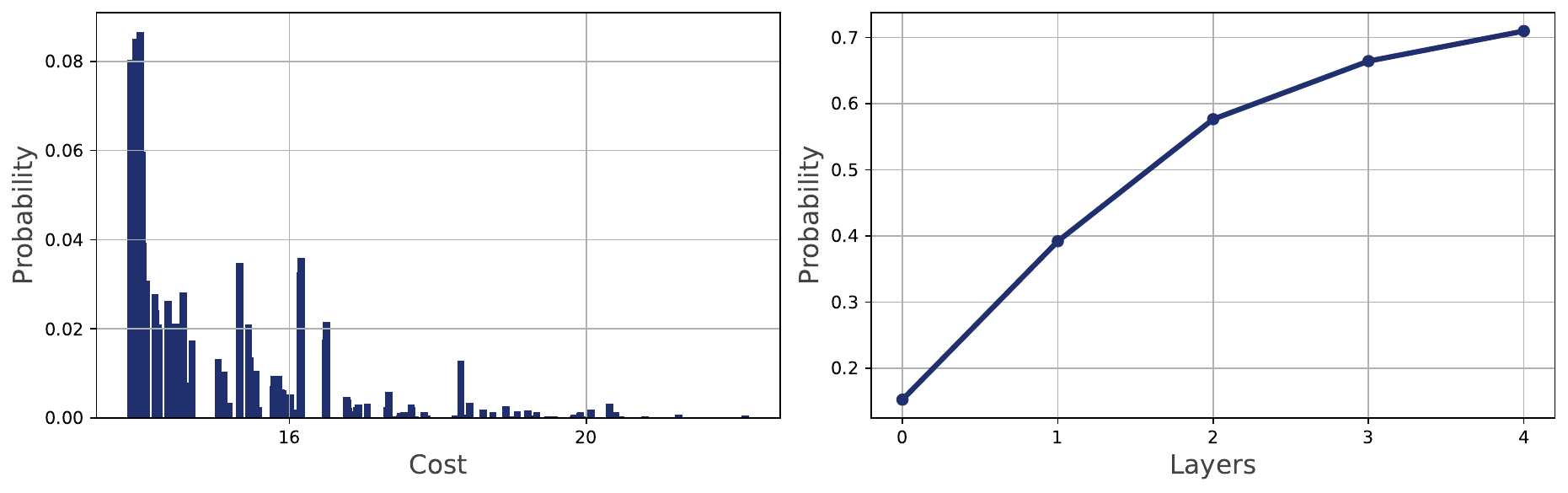"}
        \subcaption{$\phi=\{(1,0),(2,0),(3,0),(4,1),(5,1),(6,1)\}$, $N=4$, $|\mathcal F|=144$.}
        \label{fig:benchmark_2c}
    \end{subfigure}

    \caption{Left: Aggregated probabilities of measuring a feasible solution $f$ with cost $C(f)$ for QAOA with $3$ layers. Right: Success probability $P_{0.1}$ of measuring a feasible solution $f$ with cost $C(f)<1.1\cdot C_{\min}$ for QAOA with a varying number of layers. The case $0$ layers corresponds to a uniform superposition of all feasible solutions.}
    \label{fig:benchmark_2}
\end{figure*}

As a primary metric we choose - deviant form the frequently used approximation ratio (AR) - the {\bf success probability} $P_{\alpha}$ of measuring a solution $f\colon [M]\rightarrow [N]$ such that its cost value satisfies $C(f)<(1+\alpha)\cdot C_{\min}$ , for an approximation factor $\alpha>0$, where $C_{\min}$ is the cost value corresponding the optimal solution. 

The results are shown in Figures~\ref{fig:benchmark_1}, \ref{fig:benchmark_2} (right) for $\alpha=0.1$ and a varying number of layers. The findings \textbf{clearly} indicate that the QAOA routine amplifies the probabilities of (near-) optimal solutions and even scales to larger problem instances. Additionally, the aggregated probabilities of measuring a solution with a given cost are shown in Figures~\ref{fig:benchmark_1}, \ref{fig:benchmark_2}  (left). 

In general, the success probability reflects the ability of the QAOA routine to boost the probabilities of (near-) optimal solutions. We argue that this metric can be more meaningful than the overall approximation ratio of the resulting superposition state: Suppose that, on hard instances of an optimization problem of input size $n$ and solution space exponential in $n$, a QAOA routine achieves to boost the probability of measuring (near-) optimal solutions to a magnitude inverse polynomial in $n$. This would already be a potential advantage over classical algorithms, even if the probability of measuring a ``bad'' solution would be high (and therefore, the overall approximation ratio would be poor).

\section{Conclusion and Outlook}\label{sec:outlook}

In this work, we outline a strategy for solving the PBS problem with constrained QAOA on gate-based quantum computing devices. 

Our solution is implemented in the high-level programming framework \textit{Qrisp}. 
This brings the advantage of a well-arranged code base enabling rapid and seamless adoption and testing of further improvements. Potential avenues to further improve our method are the following:

\begin{itemize}

    \item Perform some numerical estimations how fast and effective the algorithm performs for practically relevant problem scales. Similar work has been conducted by Hoefler et al. in \cite{Hoefler_2023}.

    \item Develop a classification algorithm, to decide whether a PBS instance should be split using a strategy for reducing the problem size introduced below, or, alternatively, tackled with the quantum approach.

    \item Utilize the Quantum Minimum Finding algorithm \cite{vanApeldoorn_2020} to further boost the probability of measuring a (near-) optimal solution.

    \item Optimize the algorithms for specific quantum devices. For example, on ion-trap systems, the global M{\o}lmer-S{\o}rensen gate \cite{Molmer_1999} can be applied to entangle multiple qubits simultaneously. This may significantly speed up the application of the quantum cost operator $\exp(-i\gamma C)$.
    With \textit{Qrisp}, this can be conveniently implemented utilizing the \texttt{GMSEnvironment}.

    \item Develop a method to evaluate the cost function gradient without relying on the finite differences formula. In a previous work, Wierichs et al. \cite{Wierichs_2022} gave general formulas for achieving this task, which might yield a custom optimizer for this problem type.

\end{itemize}

Finally, note that the strategy described in this work is not limited to the particular problem considered here. The state preparation algorithm is specific to the constraints encoded by a tree graph, but not to the optimization problem itself which is further specified by its cost function. 
Future work may employ the blueprint outlined in this paper to other combinatorial optimization problems with constraints.

\subsection{Reducing the problem size}

In this part, we explain how the problem structure can further be leveraged so that quantum computing and classical high-performance computing can be combined for addressing industry scale problems instances.

To be precise,  we explain how a PBS instance can be decomposed, that is, how the PBS tree can be cropped into smaller instances, such that the solution of the original problem is obtained by solving smaller instances in an iterative fashion.
Depending on the structure of such instances, classical and quantum algorithms may be employed.
In Appendix \ref{appendix:classical_optimization}, we further discuss a classical dynamic programming algorithm for solving PBS problems, and thereby identify instances that are amenable to classical solvers. This is utilized to derive simple recommendations on how a PBS tree can be cropped. 

\smallskip

Decomposing a PBS instance relies on the following key observation which is a consequence of the structure of the cost function \ref{eq:cost}:
Given a tree $T$ and a node $r\in T$, let $T_r$ denote the subtree with root $r$, and $\hat T_r$ denote the tree obtained from $T$ by contracting the subtree $T_r$ to the node $r$.
An optimal solution $f$ for the PBS instance $P=(T,\{c_{i,j}^x\},N)$ restricts to an optimal solution $f'$ for the PBS instance $P'=(T_r,\{c_{i,j}^x\},N)$ with initial condition $f'(r)=f(r)$. 
Therefore, we can crop the PBS tree at the node $r$ and find an optimal solution for the problem $P$ as follows:

\begin{itemize}
    \item Solve the PBS instance $P'$ with the initial condition $f'(r)=i$ for all $i\in[N]$.
    \item Solve the PBS instance $\hat P=(\hat T_r,\{\hat c_{i,j}^x\},N)$ where $\hat c_{i,j}^x=c_{i,j}^x$ for all $x\in \hat T$ and $\hat c_{i,j}^{r}=c_{i,j}^{r}+c_i^{r}$. Here, $c_i^{r}$ is the minimal cost for the instance $P'$ given that $f'(r)=i$.
    \item Extend the optimal solution $\hat f$ for the instance $\hat P$ to an optimal solution for the instance $P$ by combining it with an optimal solution for $P'$ with initial condition $f'(r)=\hat f(r)$.
\end{itemize}

While this method comes at the expense of solving the instance $P'$ a total number of $N$ times, these steps can be performed in parallel.

An analysis of the classical algorithm presented in Appendix \ref{appendix:classical_optimization} shows that a solution can be found in runtime $\mathcal O(M\cdot N^{m+1})$, where $m$ is the maximum degree of any node in the tree $T$. In particular, for a star tree (i.e., all nodes are immediate predecessors of the root) the runtime is in $\mathcal O(N^M)$, whereas for a chain the runtime is in $\mathcal O(M\cdot N^2)$, and for a binary tree the runtime is in $\mathcal O(M\cdot N^3)$. This suggests that specific scenarios, such as star trees (or, more broadly, ‘bushy trees’) present significant challenges for classical solutions, making quantum solutions potentially more advantageous. On the other hand, there are instances that can be solved efficiently with classical methods.
This motivates the following rule for reducing the problem size: We crop the PBS tree $T$ at nodes $r$ such that the subtrees $T_r$ have a small maximum degree $m$. 
Additionally, further cropping of the tree and solving the resulting PBS instances with suitable classical or quantum algorithms can be applied.

Therewith, a suitable strategy for reducing the problem size can be developed based on industry relevant problems instances.

\section*{Acknowledgment}

This research was funded by
the Federal Ministry for Economic Affairs and Climate Action (German: Bundesministerium für Wirtschaft und Klimaschutz), projects Qompiler (grant agreement no: 01MQ22005A) and EniQmA (grant agreement no: 01MQ22007A), and by the European Union, project OASEES (HORIZON-CL4-2022, grant agreement no 101092702).
The authors are responsible for the content of this publication.

\section*{Data availability}
All data generated during this study are included in this published article.

\section*{Code availability}
\textit{Qrisp} is an open-source Python framework for high-level programming of quantum computers.
The source code is available in \href{https://github.com/eclipse-qrisp/Qrisp}{https://github.com/eclipse-qrisp/Qrisp}.
The underlying code and datasets for this study are available in 
\href{https://github.com/renezander90/PBS-QAOA}{https://github.com/renezander90/PBS-QAOA}.

\bibliographystyle{quantum}
\bibliography{sources}

\appendix
\section{Classical optimization}
\label{appendix:classical_optimization}

A classical dynamic programming method for solving the PBS problem is described in Algorithm~\ref{alg:classical_pbs}. This method is amenable to a complexity theoretic analysis of its runtime and thereby facilitates distinguishing classically solvable instances from intractable ones.

Recall that $m=\max_{x\in T}|\pred(x)|$ is the maximal number of predecessors of any node in the tree $T$. Then Algorithm~\ref{alg:classical_pbs} finds the optimal solution in runtime $\mathcal{O}(M\cdot N^{m+1})$: For each node $x\in [M]$, and  each assignment $f(x)=i\in [N]$, it explores all feasible assignments $f(y)=j\in [N]$ for $y\in\pred(x)$. 

\begin{algorithm}[htbp]
\caption{Classical PBS solution}\label{alg:classical_pbs}

\KwIn{A PBS instance $P=(T,\{c_{i,j}^x\},N)$.}

\KwOut{The optimal solution $f\colon [M]\rightarrow [N]$.}

\tcp{Dictionary to store partial solutions}
$J \gets \emptyset$

\tcp{Let the height $h(x)$ of a node $x$ in a tree be the number of edges on the longest path from the node down to a leaf. A leaf has height of 0. The height of a tree $h(T)$ is the height of its root. Let $T_h$ be the set of nodes of height $h$.}

\For{$x\in T_0$}{   
    \For{$i=1$ \KwTo $N$}{
        $c_x^i \gets 0$\;
    }
}

\For{$h=1$ \KwTo $h(T)$}{
    \For{$x\in T_h$}{   
        \For{$i=1$ \KwTo $N$}{
            $c_i^x \gets \min\limits_{j(y)}\left\{\sum\limits_{y\in\pred(x)} (c_j^y+c_{i,j}^y)\right\}$\;
            \tcp{$j^*(y)$ the optimal feasible assignment of sites $j\in [N]$ to parts $y\in\pred(x)$ given that $x$ is at site $i$.}
            $J[x,i] \gets j^*(y)$\;
        }
    }
}

\tcp{Find optimal cost value and assignment for the root.}
$C \gets \min\limits_{i\in [N]}\{c_r^i\}$\;
Recover optimal solution $f$ from $J$

\end{algorithm}

\section{Variance of Cost Function's gradients }
\label{appendix:barren_plateaus}
As  aforementioned in Section~\ref{sec:mixers}, the proposed constrained mixer solution lightens the burden of vanishing gradients. In Figure~\ref{fig:variance_bp} we show that (increasing the problem size according to the experimental setup in Table~\ref{tab:variance_table_setup}) the decaying of the variance is less impactful compared to other variational algorithms. Specifically, in the case of the most known variational algorithms affected by Barren Plateaus, the variance becomes significantly smaller by several orders of magnitude, even with circuits with a much smaller number of qubits compared to our experiments \cite{Larocca_2022}.

\begin{table}[htbp]
    \centering
    \begin{tabular}{c|c|c}
        $N_q$  &  $N$    &  $\phi$\\\hline
        12 & 3 & $\{(1,0),(2,0),(3,1)\}$\\
        16 & 4 & $\{(1,0),(2,0),(3,1))\}$\\
        20 & 4 & $\{(1,0),(2,0),(3,1),(4,2)\}$\\
        24 & 4 & $\{(1,0),(2,0),(3,1),(4,2),(5,4)\}$\\
    \end{tabular}
    \caption{Setup for the experiments shown in Figure~\ref{fig:variance_bp}. $N_q$ is the number of qubits and $N$ the number of sites.}
    \label{tab:variance_table_setup}
\end{table}

\begin{figure*}[]
    \centering
    \includegraphics[scale=0.52]{"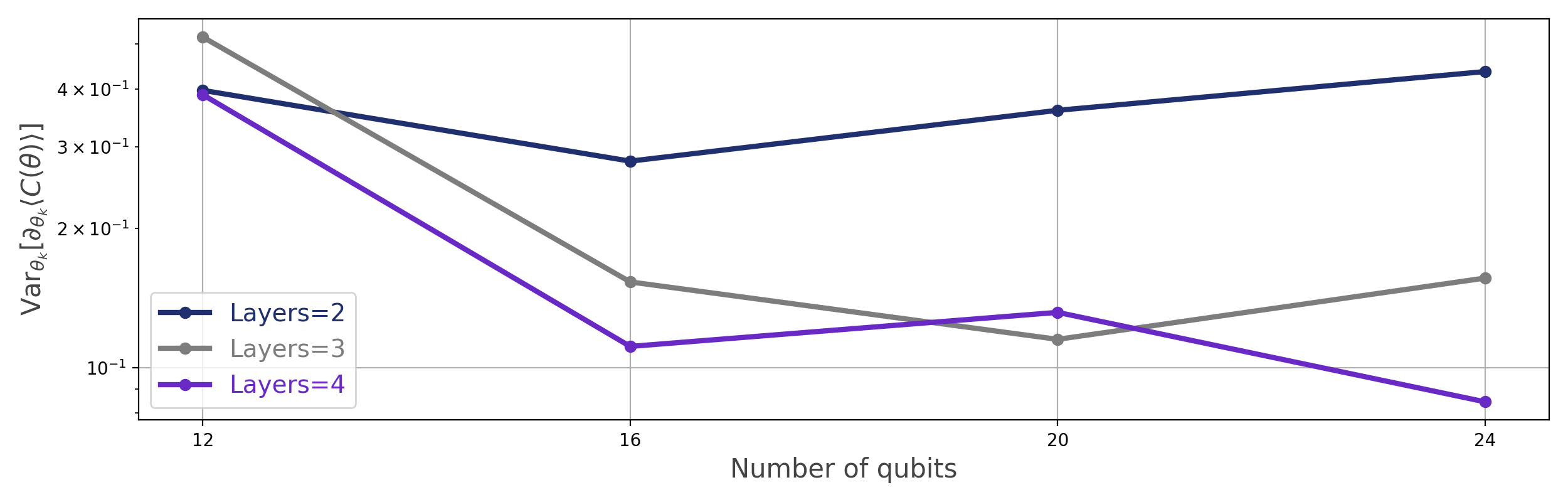"}
    \caption{Logarithmical plot visualizing the variance of the cost function's gradients of the proposed solution's Ansatz. The experiments are done increasing the problem sizes as reported in Table~\ref{tab:variance_table_setup}. The gradient of the cost function is computed with the finite difference method w.r.t. the last parameter of the Ansatz, and the variance is calculated with 100 samples.
    }
    \label{fig:variance_bp}
\end{figure*}

\end{document}